\documentclass[journal]{IEEEtran}
\usepackage{amsmath,amsfonts,amssymb,amsthm}
\usepackage{textcomp}
\usepackage{stfloats}
\usepackage{url}
\usepackage{verbatim}
\usepackage{graphicx}    % Always load graphicx first.
\usepackage{epstopdf}    % This triggers EPS-->PDF conversion.
\usepackage{pgfplots}
\pgfplotsset{compat=1.18}
\usepackage{balance}
\usepackage{mathrsfs}
\usepackage{cite}
\usepackage{xcolor}
\usepackage[bookmarks=true,colorlinks=true,allcolors=blue,allbordercolors=blue]{hyperref}
\usepackage{multirow}
\usepackage{bm}
\usepackage{comment}

\usepackage{array}
\usepackage{booktabs}
\usepackage{pifont}  % for \ding{55}

\usepackage[table]{xcolor}
\usepackage{colortbl}
\definecolor{lightblue}{RGB}{220,230,245}

\usepackage{color}
\definecolor{amaranth}{rgb}{0.9, 0.17, 0.31}
\definecolor{AlirezaPurple}{RGB}{150, 0, 250}

\usepackage{subcaption}

\usepackage[font=small]{caption}
\usepackage{pdfpages}

\usepackage{tikz}
\usetikzlibrary{tikzmark,decorations.pathreplacing, patterns}
\usetikzlibrary{arrows, arrows.meta}

\usepackage{rotating}

\usepackage{algorithm}
\usepackage{algorithmic}

\usepackage{fancyhdr}

\usepackage{acronym}

\usepackage{authblk}

\usepackage{soul}

\usepackage{amsthm}
\usepackage{babel}

\theoremstyle{plain}

\newtheorem{prop}{\protect\propname}

\theoremstyle{remark}

\providecommand{\theoremname}{Theorem}
\providecommand{\propname}{Proposition}
\providecommand{\remarkname}{Remark}

\providecommand{\corollaryname}{Corollary}

\usepackage[super]{nth}
\allowdisplaybreaks

\usepackage{acronym}

\acrodef{2d}[2D]{bi-dimensional}
\acrodef{5g}[5G]{fifth-generation}
\acrodef{6g}[6G]{sixth-generation}
\acrodef{aoa}[AoA]{angle-of-arrival}
\acrodef{ao}[AO]{alternating optimization}
\acrodef{aod}[AoD]{angle-of-departure}
\acrodef{bs}[BS]{base station}
\acrodef{csi}[CSI]{channel state information}
\acrodef{cp}[CP]{canonical polyadic}
\acrodef{cfar}[CFAR]{constant false alarm rate}
\acrodef{ca-cfar}[CA-CFAR]{cell averaging \ac{cfar}}
\acrodef{ccrb}[CCRB]{constrained Cramér-Rao bound}
\acrodef{crb}[CRB]{Cramér-Rao bound}
\acrodef{dais}[DAIS]{delay-angle information spoofing}
\acrodef{dft}[DFT]{discrete Fourier transform}
\acrodef{esprit}[ESPRIT]{estimation of signal parameters via rotational invariance techniques}
\acrodef{fft}[FFT]{fast Fourier transform}
\acrodef{fim}[FIM]{Fisher information matrix}
\acrodef{fr}[FR]{frequency range}

\acrodef{gnss}[GNSS]{global navigation satellite system}
\acrodef{gospa}[GOSPA]{generalized optimal sub-pattern assignment}
\acrodef{hd}[HD]{high-definition}
\acrodef{los}[LoS]{line-of-sight}
\acrodef{ls}[LS]{least-squares}

\acrodef{mcrb}[MCRB]{mismatched \ac{crb}}
\acrodef{mmWave}[mmWave]{Millimeter-wave}
\acrodef{mimo}[MIMO]{multiple-input multiple-output}
\acrodef{miso}[MISO]{multiple-input single-output}
\acrodef{mpc}[MPC]{multipath component}
\acrodef{ml}[ML]{maximum likelihood}
\acrodef{mf}[MF]{matched-filter}
\acrodef{nlos}[NLoS]{non-line-of-sight}
\acrodef{ofdm}[OFDM]{orthogonal frequency division multiplexing}
\acrodef{omp}[OMP]{orthogonal matching pursuit}
\acrodef{peb}[PEB]{position error bound}
\acrodef{pla}[PLA]{physical layer authentication}
\acrodef{poam}[POAM]{penalized optimal assignment metric}
\acrodef{prmse}[PRMSE]{penalized root mean squared error}
\acrodef{prs}[PRS]{positioning reference signal}
\acrodef{roi}[ROI]{region of interest}
\acrodef{rf}[RF]{radio frequency}
\acrodef{rfc}[RFC]{radio frequency chain}
\acrodef{rcs}[RCS]{radar cross section}
\acrodef{ris}[RIS]{reflective intelligent surface} 
\acrodef{rmse}[RMSE]{root mean square error}
\acrodef{simo}[SIMO]{single-input multiple-output}
\acrodef{siso}[SISO]{single-input single-output}
\acrodef{snr}[SNR]{signal-to-noise ratio}
\acrodef{sota}[SoTA]{state-of-the-art}
\acrodef{sp}[SP]{scatter point}
\acrodef{svd}[SVD]{singular value decomposition}
\acrodef{tdoa}[TDoA]{time difference of arrival}
\acrodef{toa}[ToA]{time of arrival}
\acrodef{qcqp}[QCQP]{quadratically constrained quadratic program}

\acrodef{ue}[UE]{user equipment} 
\acrodef{ula}[ULA]{uniform linear array}
\acrodef{ura}[URA]{uniform rectangular array}
\acrodef{va}[VA]{virtual anchor}

\newtheorem{remark}{Remark}

\begin{document}
\bstctlcite{IEEEexample:BSTcontrol}
%
% paper title
% Titles are generally capitalized except for words such as a, an, and, as,
% at, but, by, for, in, nor, of, on, or, the, to and up, which are usually
% not capitalized unless they are the first or last word of the title.
% Linebreaks \\ can be used within to get better formatting as desired.
% Do not put math or special symbols in the title.
% \title{A Blind Analog Precoder Design for Location Privacy in MIMO Systems
% }
\title{From Pilot to Precoding Design: Blind Angular Spoofing For Location Privacy in MIMO Systems}

%
%
% author names and IEEE memberships
% note positions of commas and nonbreaking spaces ( ~ ) LaTeX will not break
% a structure at a ~ so this keeps an author's name from being broken across
% two lines.
% use \thanks{} to gain access to the first footnote area
% a separate \thanks must be used for each paragraph as LaTeX2e's \thanks
% was not built to handle multiple paragraphs
%

\author{
\IEEEauthorblockN{Priyanka Maity\IEEEauthorrefmark{1}, Lorenzo Italiano\IEEEauthorrefmark{2}, Alireza Pourafzal\IEEEauthorrefmark{1}, Gonzalo Seco-Granados\IEEEauthorrefmark{3},}

\IEEEauthorblockN{Hui Chen\IEEEauthorrefmark{1}, Monica Nicoli\IEEEauthorrefmark{2}, Henk Wymeersch\IEEEauthorrefmark{1}}

\IEEEauthorblockA{\IEEEauthorrefmark{1}Chalmers University of Technology, G\"oteborg, Sweden}

\IEEEauthorblockA{\IEEEauthorrefmark{2}Politecnico di Milano, Milan, Italy}

\IEEEauthorblockA{\IEEEauthorrefmark{3}Autonomous University of Barcelona, Barcelona, Spain}\vspace{-20pt}
\thanks{This work has been supported, in part, by the SNS JU project 6G-DISAC under the Grant Agreement No 101139130 and in part by Vinnova under project number 2026-00976.}}

% <-this % stops a space
%\thanks{M. Shell was with the Department
%of Electrical and Computer Engineering, Georgia Institute of Technology, Atlanta,
%GA, 30332 USA e-mail: (see %http://www.michaelshell.org/contact.html).}% <-this % stops a space
%\thanks{J. Doe and J. Doe are with Anonymous University.}% <-this % stops a space
%\thanks{Manuscript received April 19, 2005; revised August 26, 2015.}

% make the title area
 \maketitle

% As a general rule, do not put math, special symbols or citations
% in the abstract or keywords.
\begin{abstract}
This paper studies location privacy in uplink MIMO systems, where a user equipment seeks to spoof the angular signature observed by a single base station  performing localization. We propose a blind analog precoder design that manipulates the perceived  angle-of-arrival  and angle-of-departure  configuration without requiring channel-gain knowledge. The method enforces consistency between the received signal and a desired spoofed angular subspace, and is solved using an alternating optimization algorithm under practical amplitude constraints. Simulations in a multipath scenario show that the proposed approach achieves near-perfect angular spoofing and clearly outperforms pilot-only blind spoofing, which exhibits an error floor. The results also show a trade-off between spoofing accuracy and communication rate, depending on the chosen virtual geometry.  
%This paper studies location privacy in uplink MIMO systems, where a user equipment (UE) aims to manipulate the spatial signature observed at a single base station (BS) performing localization from multipath channel measurements. In contrast to prior approaches, we propose a blind analog precoder design that enables controlled spoofing of the angle-of-arrival (AoA) and angle-of-departure (AoD) signatures without requiring knowledge of the channel gains. The problem is formulated by designing time-varying precoders such that the received signal lies in the observation subspace associated with a desired set of target angles, thereby making the observed angular signature consistent with a spoofed channel. An alternating optimization algorithm is developed to jointly update the precoder and auxiliary variables under practical amplitude constraints. Simulation results in a multipath scenario show that the proposed method achieves near-perfect spoofing of the target angular signature, while pilot-only blind spoofing exhibits a clear error floor. The results further reveal a trade-off between spoofing accuracy and communication performance, where highly effective manipulation of angular cues used for positioning can be achieved, but the achievable rate depends on how well the chosen virtual configuration aligns with the underlying channel conditions.
\end{abstract}

% Note that keywords are not normally used for peerreview papers.
\begin{IEEEkeywords}
Location privacy,  analog precoding, blind spoofing, trustworthiness.
\end{IEEEkeywords}

\section{Introduction}

In modern cellular systems, large antenna arrays and wide bandwidths enable the extraction of channel features such as \ac{toa}, \ac{aoa}, and \ac{aod}, which together enable increasingly accurate position estimation \cite{witrisal2016high, wymeersch20185g, italiano2025tutorial}. These quantities are directly tied to the underlying  geometry, since they describe the delay and directions of the paths linking the transmitter, receiver, and scatterers \cite{Shahmansoori2018}. Such geometric features serve as the basis for downstream spatial inference and can be combined to estimate user position \cite{Sun2022,Lin2020}. This raises significant privacy concerns, especially when location-relevant information can be inferred without the user’s explicit consent \cite{abedi2022non}.

\begin{figure}
    \centering
    \includegraphics[width=0.9\linewidth]{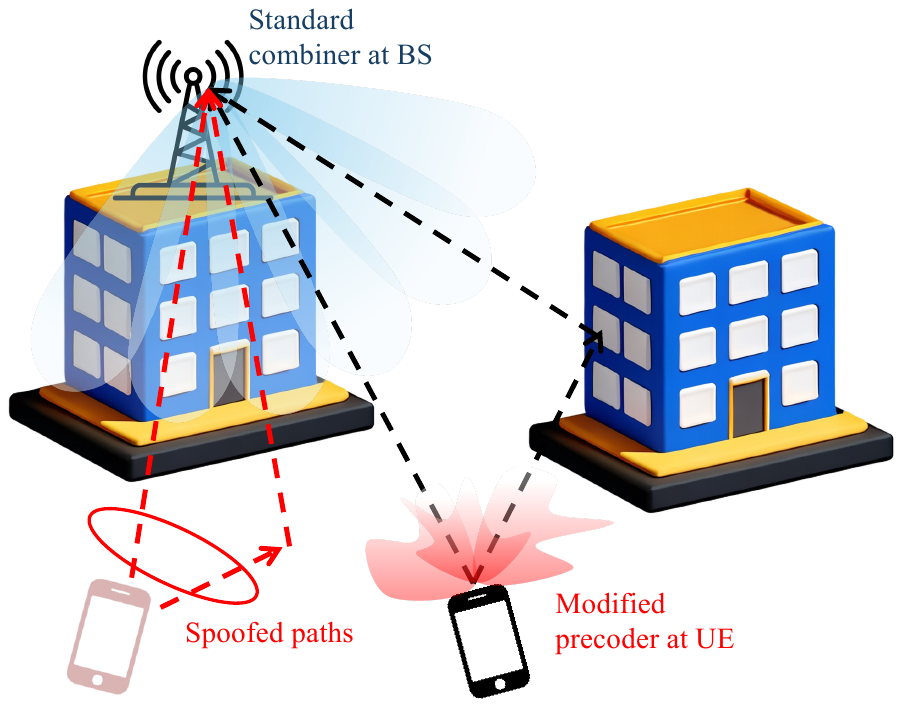}
    \caption{By modifying its uplink precoder, the UE shapes the received signal at the BS so that spoofed paths (red) are perceived instead of the true propagation paths (black), leading to a manipulated angular signature at the receiver.}
    \label{fig:model}
\end{figure}

%One way to protect location privacy is to manipulate the transmitted signal so that the receiver infers a distorted or misleading geometric signature \cite{checa2020location}. The effectiveness of such an approach depends on both the information available at the \ac{ue} about the channel and its ability to shape the transmitted signal. Existing works have considered privacy enhancement under full or partial \ac{csi} at the \ac{ue}, using this knowledge to optimize the waveform or beamforming strategy \cite{ardagna2007location,  zhang2024privacy, tomasin2022beamforming, zhang_privacy_2025, khan2025beamforming}. Such \ac{csi} may be obtained through channel estimation, reciprocity, or prior environmental knowledge. However, this assumption is restrictive in privacy-sensitive scenarios, where the \ac{ue} may not have reliable knowledge of the channel toward an unauthorized or potentially malicious \ac{bs}. This motivates the study of \ac{csi}-free, or \emph{blind}, privacy mechanisms \cite{li2024channel, li2025delay, italiano2025holotrace}, which aim to influence localization-relevant features without relying on explicit channel knowledge. 

One way to protect location privacy is to manipulate the transmitted signal so that the receiver observes a distorted geometric signature rather than the true one. Early work explored this idea through location obfuscation in a broad privacy context \cite{ardagna2007location}, while more recent wireless-specific studies showed that signal design can directly mislead mmWave localization mechanisms \cite{checa2020location}. Building on this principle, several works have assumed that the \ac{ue} has at least partial \ac{csi} and can therefore optimize its transmission accordingly. For example, privacy in delay-based localization has been enhanced through artificial noise or artificial multipath design \cite{zhang2024privacy}, and beamforming-based privacy mechanisms have been developed for cellular and MIMO-OFDM systems \cite{tomasin2022beamforming, zhang_privacy_2025, khan2025beamforming}. These approaches rely on channel knowledge obtained through estimation, reciprocity, or prior knowledge of the environment. In privacy-sensitive settings, however, such assumptions may be unrealistic, especially when the \ac{bs} is unauthorized or potentially malicious. This has motivated the development of \textcolor{black}{geometry-aware} \ac{csi}-blind privacy mechanisms, including fake-path injection \cite{li2024channel}, delay-angle spoofing without explicit \ac{csi} \cite{li2025delay}, and privacy preservation for mmWave MIMO-OFDM systems \cite{italiano2025holotrace}.

Blind privacy mechanisms still offer limited control over the geometry perceived by the receiver. The method in \cite{li2025delay} perturbs delay-based localization in a fully digital architecture, but it does not provide direct control over the angular structure observed at the \ac{bs}. The pilot-based blind spoofing approach in \cite{italiano2025holotrace}  avoids explicit \ac{csi}, but it likewise lacks a structured mechanism to enforce a desired spoofed angular configuration. This limitation is important because, as observed in \cite{italiano2025holotrace}, the chosen spoofed geometry can strongly affect the resulting communication performance. These limitations are particularly relevant when the \ac{ue} is restricted to analog precoding. In that case, the available design freedom is primarily spatial, which makes angular features the natural target for manipulation.

In this paper, \textcolor{black}{we address this lack of direct control over the angular geometry perceived by the receiver} by studying angular-signature spoofing in a single \ac{bs} narrowband multipath system, where the \ac{ue} is constrained to analog precoding and the available design freedom is therefore primarily spatial. Accordingly, we focus on the manipulation of the perceived \ac{aoa}/\ac{aod} configuration and deliberately isolate the angular component of the localization problem. Specifically, we design time-varying analog precoders that make the received uplink observation consistent with a desired false angular geometry without requiring knowledge of the channel gains. Our contributions are threefold: (i) we formulate blind angular spoofing as the design of analog precoders that force the received signal into the observation subspace associated with a target set of angles; (ii) we develop an alternating optimization method to solve the resulting constrained problem under practical amplitude constraints; and (iii) we show numerically that the proposed method achieves substantially more accurate angular spoofing than pilot-only blind methods, while also revealing the trade-off between spoofing performance and achievable communication rate.
%\vspace{-5mm}
\section{System Model}
%\subsection{{Scenario Description}}
We consider an uplink narrowband system with an $N_r$-antenna BS and an $N_t$-antenna UE, as illustrated in Fig. \ref{fig:model}. The BS is located at $\mathbf{p}_{\text{BS}}=[0,0]^T$ and orientation $o_\mathrm{BS}^{}$, and receives signals
from a UE with unknown position $\mathbf{p}_{\text{UE}}$ and orientation $o_\mathrm{UE}^{}$. The signal propagates through $L$ paths: one LoS path $(l =
0)$  and $L-1$ non-line-of-sight (NLoS) paths. We consider a narrowband system in which the channel is frequency-flat, and delay differences are not resolvable.
%\subsection{{Channel and Signal Model}}
The transmit pilot at symbol $m \in \{1,\dots,M\}$, and measurement $s~\in~\{1,\dots,S\}$ is denoted by $x_{s,m}^{}~\in~\mathbb{C}$. The pilot symbols are chosen as $x_{s,m}^{}=1$ for all $s,m$.
The exhaustive beamforming approach is employed \cite{beamforming, 6847111}: the transmit precoder $\mathbf{f}_{m}~\in~\mathbb{C}^{N_t^{}\times 1}$ changes in each $m$-th symbol. The combiner at the BS $\mathbf{w}_{s}~\in~\mathbb{C}^{N_r^{}\times 1}$  is kept constant over $M$ symbols to complete one measurement index $s$.  The channel is assumed to remain constant across all $MS$ symbols or measurement indices.
Assuming that the total transmit power is $P_t^{}$, the received signal at symbol $m$, and measurement $s$ is
\begin{align}
    y_{s,m}&= \sqrt{P_t}\sum_{l=0}^{L-1}\alpha_l\mathbf{w}_{s}^\mathsf{H} \mathbf{a}_{\mathrm{BS}}(\theta_l) \mathbf{a}_{\mathrm{UE}}^\top(\phi_l)  \mathbf{f}_{m}  x_{s,m} + n_{s,m}  \\
&= \sqrt{P_t}\mathbf{w}_{s}^\mathsf{H}  \mathbf{A}_{\mathrm{BS}} \boldsymbol{\Lambda}\mathbf{A}_{\mathrm{UE}}^\top {\mathbf{f}}_{m}  + n_{s,m}   \\
&= \sqrt{P_t}\mathbf{w}_{s}^\mathsf{H} \mathbf{H}  {\mathbf{f}}_{m} + n_{s,m}   , \label{eq:rx_signal}
\end{align}
where $\alpha_l$ is the complex path gain, $\theta_l$ and $\phi_l$ are the AoA/AoD,  $\bm{\Lambda} = \operatorname{diag}(\bm{\alpha})$, $\mathbf{A}_{\mathrm{BS}} ~=~[\mathbf{a}_{\mathrm{BS}} (\theta_0^{}), \cdots, \mathbf{a}_{\mathrm{BS}} (\theta_{L-1}^{})]$ and $\mathbf{A}_{\mathrm{UE}} ~=~[\mathbf{a}_{\mathrm{UE}} (\phi_0^{}), \cdots , \mathbf{a}_{\mathrm{UE}} (\phi_{L-1}^{})]$ are the array response matrices at BS and UE. For a half-wavelength spaced uniform linear array (ULA) with $N$ antennas, the array response at angle $\vartheta$ is $\mathbf{a}(\vartheta) = \big[1,e^{j\pi \sin(\vartheta)}, \cdots, e^{j\pi(N-1) \sin(\vartheta)}\big]^T$. The noise samples are modeled as { $n_{s,m}^{} \sim \mathcal{CN}(0,BN_0^{})$ where $B$ is available bandwidth and $N_0^{}$ the noise power spectral density.} The LoS measurements $\left\{ \theta_0, \phi_0\right\}$ are related to the UE state by  $\theta_0^{}  = \text{atan2}(y_{\text{BS}}^{} - y_{\text{UE}}^{},\; x_{\text{BS}}^{} - x_{\text{UE}}^{}) - o_{\text{BS}}^{}$ and $\phi_0^{}  = \text{atan2}(y_{\text{UE}}^{} - y_{\text{BS}}^{},\; x_{\text{UE}}^{} - x_{\text{BS}}^{}) - o_{\text{UE}}^{}$.
% It is important to note that $o_{\text{BS}}^{}$ and $o_{\text{UE}}^{}$ are unknowns in the positioning problem. 
Stacking the received signals $\{y_{s,m}\}$ into a matrix $\mathbf{Y} \in \mathbb{C}^{S \times M}$ with $[\mathbf{Y}]_{s,m} = y_{s,m}$, we obtain
\begin{equation}
\mathbf{Y} = \sqrt{P_t}\,\mathbf{W}^\mathsf{H} \mathbf{H} \mathbf{F} + \mathbf{N}, \label{Y_compact}
\end{equation}
where $\mathbf{W} = [\mathbf{w}_1,\dots,\mathbf{w}_S]$ and $\mathbf{F} = [\mathbf{f}_1,\dots,\mathbf{f}_M]$.

\section{Problem Formulation}
In this section, we describe the system operation, including both the channel estimation and communication phases, followed by the adversarial threat model.
\subsection{Operational Model at BS}
We consider a two-phase operation at the BS, consisting of a channel estimation phase followed by a communication phase. The UE is assumed to transmit pilot symbols using a precoder sequence $\mathbf{f}_m$. %; under adversarial operation, this may be replaced by adaptive precoders $\widetilde{\mathbf{f}}_{s,m}$, 
%varying across measurements and symbols.
Vectorizing  $\mathbf{Y}$ in \eqref{Y_compact} yields
%Assuming $\mathbf{x}_{m} = \mathbf{1}$, 
\begin{align}
    \mathbf{y} &= \sqrt{P_t}({\mathbf{F}}^\top \mathbf{A}_{\mathrm{UE}} \circledast \mathbf{W}^\mathsf{H}  \mathbf{A}_{\mathrm{BS}}) \boldsymbol{\alpha} +  \mathbf{n} 
    \\ &= \sqrt{P_t}\mathbf{Z} \boldsymbol{\alpha} +  \mathbf{n}\in~\mathbb{C}^{SM\times 1}.\label{eq:nominalBSmodel}
\end{align}
where $\circledast$ denotes the Khatri-Rao product.
% Stacking the received signals over $S$ measurements, 
% \begin{align}
%         \mathbf{y} = \mathbf{Z} \boldsymbol{\alpha} +  \mathbf{n} \in~\mathbb{C}^{SM\times 1},
% \end{align}
The BS  estimates the channel parameters as 
\begin{equation}
 (\widehat{\bm{\theta}}, \widehat{\bm{\phi}},\widehat{\bm{\alpha}}) = \underset{ \bm{\theta},\bm{\phi}, \bm{\alpha} }{\text{arg\,min}} \, \Vert \mathbf{y}  - \sqrt{P_t}\mathbf{Z}\boldsymbol{\alpha}\Vert^{2}. 
\end{equation}
The vectors $\bm{\theta},\bm{\phi}$ collect the AoAs and AoDs of the $L$ propagation paths, respectively and define $\bm \rho \triangleq [\bm{\theta},\bm{\phi}]$ as the angular parameter vector.
The channel gain estimate $\widehat{\bm{\alpha}}$ is given by the least squares solution as $\widehat{\bm{\alpha}} = (1/\sqrt{P_t}) \mathbf{Z}^\dagger \mathbf{y} $, where $^\dagger$ denotes Moore-Penrose inverse. The projected cost is then given as
\begin{equation}
    C(\bm{\rho}) = \left\| \left(\mathbf{I}_{SM}^{} - \mathbf{Z}\mathbf{Z}^\dagger\right)\mathbf{y} \right\|^2. \label{cost_function}
\end{equation}
Using a grid-based maximum likelihood approach, the BS evaluates the likelihood over a discretized \ac{aoa}–\ac{aod} grid and identifies the $L$ peaks to obtain the estimates $\widehat{\bm{\theta}}, \widehat{\bm{\phi}}$. The BS estimates the geometric channel parameters, namely the AoAs and AoDs, from the received signal. These estimated channel parameters are subsequently used for communication beam selection: 
the \ac{bs} selects the beam pair maximizing total received power, i.e.,
$(\hat s,\hat m)=\arg\max_{s',\,m'} \lvert y_{s',m'}\rvert^2$
%\begin{equation}
%    $\widehat{m}, \widehat{s} =  \underset{ m', s' }{\text{arg\,max}} |y_{m',s'}^{}|^2$, %\quad 
%\end{equation}
where $\mathbf{w}_\mathrm{comm}^{} = \mathbf{w}_{\widehat{s}}^{},$ and $ \mathbf{f}_\mathrm{comm}^{} = \mathbf{f}_{\widehat{m}}^{}.$ 
The achievable sum rate is  $R = B \log_2 (1 + {\gamma^{} P_t^{}}/{N_0^{}} )$, 
%\begin{equation}
 %   R = B \log_2 \left(1 + \frac{\gamma^{} P_t^{}}{N_0^{}} \right),
  %  \label{eq:comm_rate_nospoof}
%\end{equation}
% with $\gamma_k = \left| \sum_{\ell=0}^{L-1} \alpha_\ell\, \mathbf{w}_\mathrm{comm}^{\mathsf{T}} \mathbf{a}_\mathrm{BS}(\theta_\ell)\, \mathbf{a}_\mathrm{UE}^{\mathsf{T}}(\phi_\ell)\, \mathbf{f}_\mathrm{comm}\, e^{-j2\pi k\Delta f \tau_\ell} \right|^2$.
with $\gamma^{}=|\sum_{\ell=0}^{L-1}\alpha_{\ell}^{}\mathbf{w}_{\text{comm}}^{\mathsf{H}}\mathbf{a}_{\text{BS}}^{}(\theta_{\ell}^{})\mathbf{a}_{\text{UE}}^{\mathsf{T}}(\phi_{\ell}^{})\mathbf{f}_{\text{comm}}|^{2}$. 

\subsection{Adversarial UE Model and Problem Statement}
We consider an adversarial UE that replaces the nominal fixed precoder $\mathbf F$ with adaptive precoders $\widetilde{\mathbf f}_{s,m}$ varying across the measurement index $s$ and symbol index $m$. The objective is to make the BS, which processes the observations according to \eqref{eq:nominalBSmodel}, interpret the received signal as arising from a desired false angular geometry. Hence, the problem is to design $\widetilde{\mathbf f}_{s,m}$ such that the received uplink signal is consistent with the observation subspace of target AoA/AoD vectors (which we denote by $\overline{\bm{\rho}} = (\overline{\bm{\theta}},\overline{\bm{\phi}}) $), while satisfying the analog-precoding constraints and remaining temporally consistent during communication.

\section{Methodology}
In this section, we mathematically formalize the precoding design problem and propose an alternating optimization strategy. 
\subsection{Joint AoA/AoD Spoofing Formulation}
The UE designs precoders $\{\widetilde{\mathbf{f}}_{s,m}\}$ which results in the spoofed signal as follows
\begin{align}
    y_{s,m}^{\text{spoof}}&=  \sqrt{P_t}\mathbf{w}_{s}^\mathsf{H} \mathbf{H}  {\widetilde{\mathbf{f}}}_{s,m} + n_{s,m}   . \label{eq:rx_signal_spoofed}
\end{align}
 Concatenating the received signals over $M$ symbols and vectorizing, one obtains
    \begin{align}
% \mathbf{y}_{s}^\top &= \sqrt{P_t}\mathbf{w}_{s}^\mathsf{H}  \mathbf{H}\widetilde{\mathbf{F}}_{s}  + \mathbf{n}_{s}^\top \nonumber \\
 \mathbf{y}_{s} &= \sqrt{P_t}(\mathbf{w}_s^\mathsf{H}  \mathbf{A}_{\mathrm{BS}}\circledast \widetilde{\mathbf{F}}_s^\top \mathbf{A}_{\mathrm{UE}}) \boldsymbol{\alpha}  + \mathbf{n}_{s} \nonumber \\ &= \sqrt{P_t}\mathbf{Z}_s(\widetilde{\mathbf{F}}_s,\bm{\rho})  \boldsymbol{\alpha}  + \mathbf{n}_{s} \label{y_tensor_spoof},
\end{align}
where $ \widetilde{\mathbf{F}}_{s}= [\widetilde{\mathbf{f}}_{s,1} , \widetilde{\mathbf{f}}_{s,2} , \cdots, \widetilde{\mathbf{f}}_{s,M} ]$. Stacking across all the $S$ measurements, the received signal is given as $ \mathbf{y} = \sqrt{P_t}\mathbf{Z}\left(\{\widetilde{\mathbf{F}}_s\}, \bm{\rho}\right)  \boldsymbol{\alpha}  + \mathbf{n} \in~\mathbb{C}^{SM\times 1}$, where $\mathbf{Z}\left(\{\widetilde{\mathbf{F}}_s\}, \bm{\rho}\right)  = [\mathbf{Z}^{\top}_1(\widetilde{\mathbf{F}}_s, \bm{\rho}),\mathbf{Z}^{\top}_2(\widetilde{\mathbf{F}}_s, \bm{\rho}),\cdots,\mathbf{Z}^{\top}_S(\widetilde{\mathbf{F}}_s, \bm{\rho})]^\top \in~\mathbb{C}^{SM\times L}$. The BS processes the received signal generated by the spoofing precoders $\{\widetilde{\mathbf{F}}_s\}$ and evaluates the cost function $C(\bm{\rho}) $ with the nominal precoders $\mathbf{F}$, as defined in \eqref{cost_function}.  Therefore, spoofing requires that the received signal generated by $\{\widetilde{\mathbf{F}}_s\}$ is consistent with the nominal observation model evaluated at the target angles $\overline{\bm{\rho}} $, i.e., 
\begin{equation}
C({\bm{\rho}}; \{\widetilde{\mathbf{F}}_s\}) \!= \!\! \left\| \left(\mathbf{I}_{SM}^{} - \mathbf{Z}(\mathbf{F}, {\bm{\rho}}) \mathbf{Z}^\dagger(\mathbf{F}, {\bm{\rho}})\right) \mathbf{y}  \right\|^2 \label{eq:spoofingCriterion1}
\end{equation}
is minimal at $ \overline{\bm{\rho}} $.
 % Here $\bm{\lambda} \in \mathbb{C}^{L \times 1}$ is the spoofed channel gain. The quantities $\mathbf{Z}(\overline{\bm{\theta}},\overline{\bm{\phi}}) , \mathbf{Z}({\bm{\theta}},{\bm{\phi}}) $ are given as $\mathbf{Z}_s(\overline{\bm{\theta}},\overline{\bm{\phi}}) = \mathbf{w}_{s}^\top  \mathbf{A}_{\mathrm{BS}}(\overline{\bm{\theta}}) \circledast \overline{\mathbf{F}}_s^\top \mathbf{A}_{\mathrm{UE}}(\overline{\bm{\phi}})$, and $\mathbf{Z}_s({\bm{\theta}},{\bm{\phi}}) = \mathbf{w}_{s}^\top  \mathbf{A}_{\mathrm{BS}}({\bm{\theta}}) \circledast \widetilde{\mathbf{F}}_s^\top \mathbf{A}_{\mathrm{UE}}({\bm{\phi}})$. 
 A perfect spoofing attack is achieved if, for any channel gain vector $\bm{\alpha}$, there exists a vector $\bm{\lambda}$, such that the received signal induced by the spoofing precoders and true channel parameters $(\boldsymbol{\rho}, \boldsymbol{\alpha})$ is indistinguishable from that of a target channel $(\bar{\boldsymbol{\rho}}, \boldsymbol{\lambda})$ with nominal precoders. Mathematically, this corresponds to
%\begin{equation}
$\mathbf{Z}(\{\widetilde{\mathbf{F}}_s\}, \boldsymbol{\rho}) \boldsymbol{\alpha}
=
\mathbf{Z}(\mathbf{F}, \overline{\boldsymbol{\rho}}) \boldsymbol{\lambda}$.
%\end{equation} 
Here, $\bm{\lambda} \in \mathbb{C}^{L \times 1}$ denotes an equivalent channel gain associated with the target angles.
% To spoof the estimate toward the target angles $(\overline{\bm{\theta}}, \overline{\bm{\phi}})$, the estimator cost at the candidate $(\overline{\bm{\theta}}, \overline{\bm{\phi}})$ is minimized. In particular, perfect spoofing (as defined above) is achieved when this cost becomes zero, i.e.,
% \begin{equation}
% C(\overline{\bm{\theta}},\overline{\bm{\phi}}; \widetilde{\mathbf{F}}_s) \!= \!\! \left\| \left(\mathbf{I}_{M}^{} - \mathbf{Z}_s(\overline{\bm{\theta}},\overline{\bm{\phi}}) \mathbf{Z}_s^\dagger(\overline{\bm{\theta}},\overline{\bm{\phi}})\right) \mathbf{Z}_s(\bm{\theta},\bm{\phi})\, \bm{\alpha}  \right\|^2. \label{eq:mimo_cost}
% \end{equation}
\begin{prop}\label{prop:BlindSpoof}
    To achieve perfect spoofing of a target \ac{aoa} vector $\overline{\bm{\theta}}$ and \ac{aod} vector $\overline{\bm{\phi}}$  for estimator \eqref{eq:spoofingCriterion1},
    % under the constraint $\bm{\lambda} \in \mathcal{C}$ with $ \mathcal{C} = \{\bm{\lambda} : \big\| \bm{\lambda} \big\|^2  \leq P_{\text{max}} \}$,
    the cost function $C(\overline{\bm{\rho}}; \{\widetilde{\mathbf{F}}_s\}) = 0 $ iff
    % $   \forall \bm{\alpha}, \exists \bm{\lambda} \in \mathcal{C}$ such that,
    %     \begin{equation}
    %  \mathbf{Z}_s(\bm{\theta},\bm{\phi})\bm{\alpha} = \mathbf{Z}_s(\overline{\bm{\theta}},\overline{\bm{\phi}}) \bm{\lambda}  \label{theorem_eq}
    % \end{equation}
    % \begin{equation}
    %     \mathcal C(\mathbf{Z}_s(\bm{\theta},\bm{\phi})) \subseteq \mathcal C(\mathbf{Z}_s(\overline{\bm{\theta}},\overline{\bm{\phi}}) ) \label{theorem_eq}
    % \end{equation}
    \begin{equation}
        \mathcal R\left(\mathbf Z\left(\{\widetilde{\mathbf{F}}_s\}, \boldsymbol\rho\right)\right) \subseteq \mathcal R\left(\mathbf Z\left({\mathbf{F}}, \overline{\boldsymbol\rho}\right)\right). \label{theorem_eq}
    \end{equation}
    where $ \mathcal R$ is the range operator. 
\end{prop}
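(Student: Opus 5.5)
The plan is to interpret the statement in the noiseless regime, where $\mathbf{y} = \sqrt{P_t}\,\mathbf{Z}(\{\widetilde{\mathbf{F}}_s\},\boldsymbol{\rho})\boldsymbol{\alpha}$. Perfect spoofing is understood, as in the definition preceding Proposition~\ref{prop:BlindSpoof}, as holding for \emph{every} gain vector $\boldsymbol{\alpha}$, because the UE is blind to the gains. Throughout, write $\widetilde{\mathbf{Z}} \triangleq \mathbf{Z}(\{\widetilde{\mathbf{F}}_s\},\boldsymbol{\rho})$ and $\overline{\mathbf{Z}} \triangleq \mathbf{Z}(\mathbf{F},\overline{\boldsymbol{\rho}})$. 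The single tool is the standard fact that $\mathbf{P} \triangleq \overline{\mathbf{Z}}\,\overline{\mathbf{Z}}^\dagger$ is the orthogonal projector onto $\mathcal{R}(\overline{\mathbf{Z}})$. This follows from the Penrose identities: $\overline{\mathbf{Z}}\,\overline{\mathbf{Z}}^\dagger\overline{\mathbf{Z}}=\overline{\mathbf{Z}}$ and Hermitian symmetry of $\overline{\mathbf{Z}}\,\overline{\mathbf{Z}}^\dagger$. Consequently $\mathbf{I}_{SM}-\mathbf{P}$ is the orthogonal projector onto $\mathcal{R}(\overline{\mathbf{Z}})^\perp$, and $\|(\mathbf{I}_{SM}-\mathbf{P})\mathbf{v}\|^2=0$ if and only if $\mathbf{v}\in\mathcal{R}(\overline{\mathbf{Z}})$.

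With this fact, $C(\overline{\boldsymbol{\rho}};\{\widetilde{\mathbf{F}}_s\}) = P_t\|(\mathbf{I}_{SM}-\mathbf{P})\widetilde{\mathbf{Z}}\boldsymbol{\alpha}\|^2$. For sufficiency, I will assume \eqref{theorem_eq}. Then for any $\boldsymbol{\alpha}$ the vector $\widetilde{\mathbf{Z}}\boldsymbol{\alpha}$ lies in $\mathcal{R}(\widetilde{\mathbf{Z}})\subseteq\mathcal{R}(\overline{\mathbf{Z}})$, so it is fixed by $\mathbf{P}$ and the cost vanishes. Equivalently, there exists $\boldsymbol{\lambda}=\overline{\mathbf{Z}}^\dagger\widetilde{\mathbf{Z}}\boldsymbol{\alpha}$ with $\widetilde{\mathbf{Z}}\boldsymbol{\alpha}=\overline{\mathbf{Z}}\boldsymbol{\lambda}$, which is exactly the perfect-spoofing identity. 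Since the cost is nonnegative, zero is its minimum, so $\overline{\boldsymbol{\rho}}$ attains the minimum of \eqref{eq:spoofingCriterion1}.

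For necessity, I will suppose $C(\overline{\boldsymbol{\rho}};\{\widetilde{\mathbf{F}}_s\})=0$ for all $\boldsymbol{\alpha}$. Then $(\mathbf{I}_{SM}-\mathbf{P})\widetilde{\mathbf{Z}}\boldsymbol{\alpha}=\mathbf{0}$ for every $\boldsymbol{\alpha}\in\mathbb{C}^{L}$. Choosing $\boldsymbol{\alpha}=\mathbf{e}_\ell$ for $\ell=0,\dots,L-1$ gives $(\mathbf{I}_{SM}-\mathbf{P})\widetilde{\mathbf{Z}}=\mathbf{0}$. Hence every column of $\widetilde{\mathbf{Z}}$ lies in $\mathcal{R}(\overline{\mathbf{Z}})$, which is \eqref{theorem_eq}.

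The main obstacle is interpretive rather than technical. For a single fixed $\boldsymbol{\alpha}$, the cost can vanish without the full range inclusion, so the equivalence only holds under the ``for any $\boldsymbol{\alpha}$'' quantifier that comes from blindness. The proof should state this quantifier explicitly, together with the noiseless assumption, since with noise the cost is almost surely positive. I would also remark that $\mathbf{Z}^\dagger$ is well defined even if $\overline{\mathbf{Z}}$ is rank deficient, so no full-column-rank assumption is needed.
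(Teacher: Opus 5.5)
Your proposal is correct and follows essentially the same route as the paper. Both directions rest on $\mathbf{Z}(\mathbf{F},\overline{\boldsymbol{\rho}})\mathbf{Z}^\dagger(\mathbf{F},\overline{\boldsymbol{\rho}})$ being the orthogonal projector onto $\mathcal{R}(\mathbf{Z}(\mathbf{F},\overline{\boldsymbol{\rho}}))$, and the necessity direction uses the requirement holding for every $\boldsymbol{\alpha}$. Your explicit noiseless assumption and the ``for all $\boldsymbol{\alpha}$'' quantifier, made concrete via $\boldsymbol{\alpha}=\mathbf{e}_\ell$, simply state precisely what the paper's proof leaves implicit.
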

\begin{proof} Assume that \eqref{theorem_eq} holds, so that for any $\bm{\alpha} \in \mathbb{C}^{L \times 1}$, there exists $\bm{\lambda}$  satisfying,
%\begin{equation}
%
$\mathbf{Z}\left(\{\widetilde{\mathbf{F}}_s\}, \bm{\rho}\right) \bm{\alpha}  = \mathbf{Z}(\mathbf{F}, \overline{\bm{\rho}}) \bm{\lambda}$.
%\end{equation}
 Since $\mathbf{Z}(\mathbf{F}, \overline{\bm{\rho}}) \mathbf{Z}^\dagger(\mathbf{F}, \overline{\bm{\rho}})$  is the orthogonal projector onto this subspace, it follows that $\left(\mathbf{I}_{SM}^{} - \mathbf{Z}(\mathbf{F}, \overline{\bm{\rho}}) \mathbf{Z}^\dagger(\mathbf{F}, \overline{\bm{\rho}})\right) \mathbf{Z}(\{\widetilde{\mathbf{F}}_s\},\bm{\rho})\, \bm{\alpha} = \bm{0}$, $\forall  \bm{\alpha}$, 
     %\begin{align}
%\left(\mathbf{I}_{SM}^{} - \mathbf{Z}^{\mathrm{nom}}(\overline{\bm{\theta}},\overline{\bm{\phi}}) (\mathbf{Z}^{\mathrm{nom}}(\overline{\bm{\theta}},\overline{\bm{\phi}}))^\dagger\right) &\mathbf{Z}^{\mathrm{spoof}}(\bm{\theta},\bm{\phi})\, \bm{\alpha} = \bm{0}, \quad \forall  \bm{\alpha} \nonumber     \end{align}
    and hence $C(\overline{\bm{\rho}}) = 0$.
Conversely, if spoofing is perfect, then for any received signal $\mathbf{y}$ there exists $\bm{\lambda}$  such that $\mathbf{y} = \mathbf{Z}(\mathbf{F}, \overline{\bm{\rho}}) \bm{\lambda}$. Since this holds for any $\bm{\alpha}$, $\mathcal R\left(\mathbf Z\left(\{\widetilde{\mathbf{F}}_s\}, \boldsymbol\rho\right)\right) \subseteq \mathcal R(\mathbf Z(\mathbf{F}, \overline{\boldsymbol\rho}))$.
\end{proof}
\begin{remark}
 Existing pilot-based spoofing in \cite[Proposition 4]{italiano2025holotrace} aims to manipulate the received signal by designing transmit pilots $x_{s,m}$. However, it does not enforce that the received signal corresponding to the true channel lies in the subspace spanned by the spoofed channel parameters as indicated in \eqref{theorem_eq}. Unlike pilot design, precoder design provides more degrees of freedom and allows reshaping of the observation subspace.
\end{remark}
Since the true channel gain $\bm{\alpha}$ is unknown at the UE, we introduce auxiliary variables $\mathbf{d}$ and solve the following optimization problem:
\begin{align} \label{relaxed_P1}
\min_{\{\widetilde{\mathbf{F}}_s\}, \mathbf{d}, \bm{\lambda}} & \sum_{s=1}^S \big\|  \mathbf{Z}_s\left(\{\widetilde{\mathbf{F}}_s\},{\bm{\rho}}\right)\mathbf{d}  - \mathbf{Z}_s(\mathbf{F}, \overline{\bm{\rho}})\bm{\lambda}\big\|^2 \\ 
   \text{subject to}\quad &  |\widetilde{\mathbf{F}}_{s} (i,j)| \leq \frac{1}{\sqrt{N_t}}, \quad \forall i,j,s,  \\
   & \big\| \bm{\lambda} \big\|^2  \leq P_{\text{max}},
\end{align}
where $P_{\max}$ denotes the maximum allowable power of the equivalent channel gain vector $\boldsymbol{\lambda}$. The analog precoder with relaxed constraint can be achieved with phase shifters and attenuators. Each RF chain has both a phase shifter and a variable attenuator. The attenuator reduces the magnitude below ${1}/{\sqrt{N_t}}$ while the phase shifter controls the phase \cite{5749702}. Since the above problem is not a convex problem, we solve via block coordinate descent over $\{\widetilde{\mathbf{F}}_s\}, \mathbf{d}, \bm{\lambda}$. 
\subsection{Alternating Optimization}

For fixed $\mathbf d$ and $\bm\lambda$, each term depends only on the corresponding precoder $\tilde{\mathbf{F}}_s$ making the optimization problem separable across the measurements. The precoder design at the $s$-th measurement is
\begin{align}
    \min_{\widetilde{\mathbf F}_s}\quad
    &\Vert
    \widetilde{\mathbf F}_s^{\mathsf H}\mathbf b_s
    -
    \widetilde{\mathbf y}_s
    \Vert^2
    \label{eq:F_subproblem}
    \\
    \text{s.t.}\quad
    &|\widetilde{\mathbf F}_s(i,m)| \le c,
    \qquad \forall i,m,
    \nonumber
\end{align}
where $\mathbf b_s \triangleq  \mathbf{A}_{\mathrm{UE}}^\star  \text{diag}^{\mathsf H} (\mathbf{d}) \mathbf{A}_{\mathrm{BS}}^{\mathsf H} \mathbf w_s$, $ \widetilde{\mathbf y}_s
\triangleq
\mathbf Z_s(\mathbf{F},\overline{\bm\rho})\bm\lambda$, and $c \triangleq {1}/{\sqrt{N_t}}$.
%\[
%\mathbf b_s \triangleq \mathbf H^{\mathsf H}\mathbf w_s,
%\qquad
%\mathbf y_s^{\mathrm{desired}}
%\triangleq
%\mathbf Z_s^{\mathrm{nom}}(\overline{\bm\theta},\overline{\bm\phi})\bm\lambda,
%\qquad
%c \triangleq \frac{1}{\sqrt{N_t}}.
%\]
Problem \eqref{eq:F_subproblem} is convex and separable across the $M$ columns of $\widetilde{\mathbf F}_s$. Specifically, letting $\widetilde{\mathbf f}_{s,m}$ denote the $m$-th column of $\widetilde{\mathbf F}_s$, we obtain
\begin{equation}
    \min_{\widetilde{\mathbf f}_{s,m}}
    \vert
    \mathbf b_s^{\mathsf H}\widetilde{\mathbf f}_{s,m}
    -
    \widetilde{y}_{s,m}
    \vert^2
    \quad
    \text{s.t.}\quad
    |\widetilde{\mathbf f}_{s,m}(i)| \le c,\ \forall i.
    \label{eq:column_subproblem}
\end{equation}
For \eqref{eq:column_subproblem}, the optimal phase of each entry aligns the terms
$b_s^*(i)\widetilde{\mathbf f}_{s,m}(i)$ with the target phase
$\angle \widetilde{y}_{s,m}$. Hence, the optimizer admits the form $\widetilde{\mathbf f}_{s,m}(i)
    =
    \rho_{i,m}\,
    e^{j(\angle \widetilde{y}_{s,m}+\angle b_s(i))}$ 
Substituting this into \eqref{eq:column_subproblem} reduces the problem to
\begin{equation}
    \min_{\{ \rho_{i,m} \in [0,c]\}}
    \big(
    \sum_{i=1}^{N_t}|b_s(i)|\rho_{i,m}
    -
    |\widetilde{y}_{s,m}|
    \big)^2.
    %\quad
    %\text{s.t.}\quad
    %0\le \rho_{i,m}\le c.
    \label{eq:rho_subproblem}
\end{equation}
A convenient optimal solution is $\rho_{i,m}
    =
    \min\{c,\beta_m |b_s(i)|\}$, 
where $\beta_m\ge 0$ is chosen such that
\begin{equation}
    \sum_{i=1}^{N_t}
    |b_s(i)|\min\{c,\beta_m |b_s(i)|\}
    =
    \min\{|\widetilde{y}_{s,m}|,\ c\|\mathbf b_s\|_1\}.
    \label{eq:beta_eq}
\end{equation}
Therefore, the precoder update is
\begin{equation}
    \widetilde{\mathbf f}_{s,m}(i)
    =
    \min\{c,\beta_m |b_s(i)|\}
    e^{j(\angle \widetilde{y}_{s,m}+\angle b_s(i))}.
    \label{eq:fopt_corrected}
\end{equation}
The smallest value of $\beta_m$ can be found by bisection from \eqref{eq:beta_eq}.
For known $\{\widetilde{\mathbf{F}}_s\}, \bm{\lambda}$, the optimal value of $\mathbf{d}$ is found as
%\subsection{Design of  $\mathbf{d}$}
\begin{equation}
         \mathbf{d} = \mathbf{Z}^{\dagger}\left(\{\widetilde{\mathbf{F}}_s\}, \bm{\rho} \right)\mathbf{Z}\left({\mathbf{F}}, \overline{\bm{\rho}}\right) \bm{\lambda}.  \label{d}
\end{equation}
Similarly, for known $\{\widetilde{\mathbf{F}}_s\}, \mathbf{d}$, the optimal value of $\bm{\lambda}$ is 
%\subsection{Design of  \texorpdfstring{$\boldsymbol{\lambda}$}{lambda}}
\begin{align}
 &\bm{\lambda}= \label{lambda} \\
&\begin{cases}
 \mathbf{Z}^\dagger ({\mathbf{F}}, \overline{\bm{\rho}})  \mathbf{Z}\left(\{\widetilde{\mathbf{F}}_s\}, \bm{\rho}  \right) \mathbf{d} ,  \quad \text{if }\big\| \bm{\lambda} \big\|^2  \leq P_{\text{max}}, \\[6pt]
\left(\mathbf{Z}^\mathsf{H}({\mathbf{F}}, \overline{\bm{\rho}})\mathbf{Z}({\mathbf{F}}, \overline{\bm{\rho}})+ \mu \mathbf{I}\right)^{-1}\mathbf{Z}^{\mathsf{H}}({\mathbf{F}}, \overline{\bm{\rho}})\mathbf{Z}\left(\{\widetilde{\mathbf{F}}_s\}, \bm{\rho}  \right)\mathbf{d} , & \text{else},
\end{cases}\nonumber 
\end{align}
where $ \mu  \geq 0$ is chosen such that $\big\| \bm{\lambda} \big\|^2  = P_{\text{max}}$. 
The updates in \eqref{eq:fopt_corrected}, \eqref{d}, \eqref{lambda} are iterated till convergence.

\subsection{Convergence Analysis}

Defining the objective function in \eqref{relaxed_P1} as $C(\{\widetilde{\mathbf{F}}_s\}, \mathbf{d}, \boldsymbol{\lambda})$, and letting 
\( C^k \triangleq C(\{\widetilde{\mathbf{F}}_s\}^k, \mathbf{d}^k, \boldsymbol{\lambda}^k) \), 
at iteration \( k \), the updates of $\{\widetilde{\mathbf{F}}_s\}$, $\mathbf{d}$, and $\boldsymbol{\lambda}$ are obtained by solving their respective subproblems optimally. Therefore,
\begin{align}
C(\{\widetilde{\mathbf{F}}_s\}^{k+1}, \mathbf{d}^k, \boldsymbol{\lambda}^k)
&\le C(\{\widetilde{\mathbf{F}}_s\}^{k}, \mathbf{d}^k, \boldsymbol{\lambda}^k), \\
C(\{\widetilde{\mathbf{F}}_s\}^{k+1}, \mathbf{d}^{k+1}, \boldsymbol{\lambda}^k)
&\le C(\{\widetilde{\mathbf{F}}_s\}^{k+1}, \mathbf{d}^{k}, \boldsymbol{\lambda}^k), \\
C(\{\widetilde{\mathbf{F}}_s\}^{k+1}, \mathbf{d}^{k+1}, \boldsymbol{\lambda}^{k+1})
&\le C(\{\widetilde{\mathbf{F}}_s\}^{k+1}, \mathbf{d}^{k+1}, \boldsymbol{\lambda}^{k}).
\end{align}
Combining the above inequalities yields $C^{k+1} \le C^k.$ Since $C(\cdot) \ge 0$, the sequence $\{C^k\}$ is monotonically decreasing and lower bounded, and hence converges. 
\subsection{Computational Complexity}

We compare the computational complexity of the proposed precoder design with the pilot design in \cite{italiano2025holotrace}. The update of $\tilde{\mathbf{F}}_s$ in \eqref{eq:fopt_corrected} is separable across the $M$ columns. Each column update requires $\mathcal{O}(N_t)$ operations, leading to the total complexity is $\mathcal{O}(SM N_t).$ On the other hand, the pilot design in \cite[Eq. 32]{italiano2025holotrace} results in complexity $\mathcal{O}(SM L^2 + L^3).$

\section{Performance Evaluation}
\begin{figure}[t]
    \centering
    \begin{subfigure}[b]{0.42\textwidth}
        \centering
    \includegraphics[width=\linewidth]{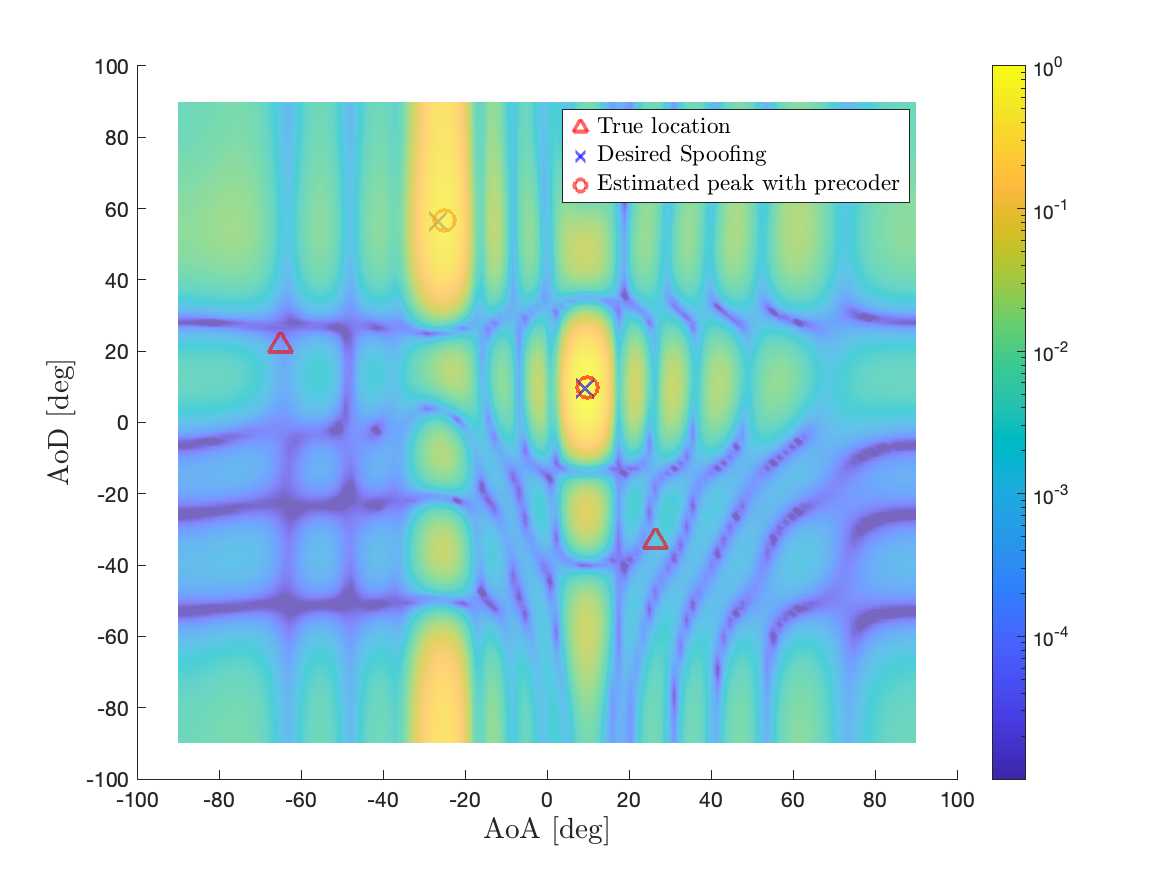}
        \caption{Precoder design (proposed)}
        \label{fig:precoder}
    \end{subfigure}
    \hfill
    %\vspace{-6mm}
    % Subfigure (b)
        \begin{subfigure}[b]{0.42\textwidth}
        \centering
        \includegraphics[width=\linewidth]{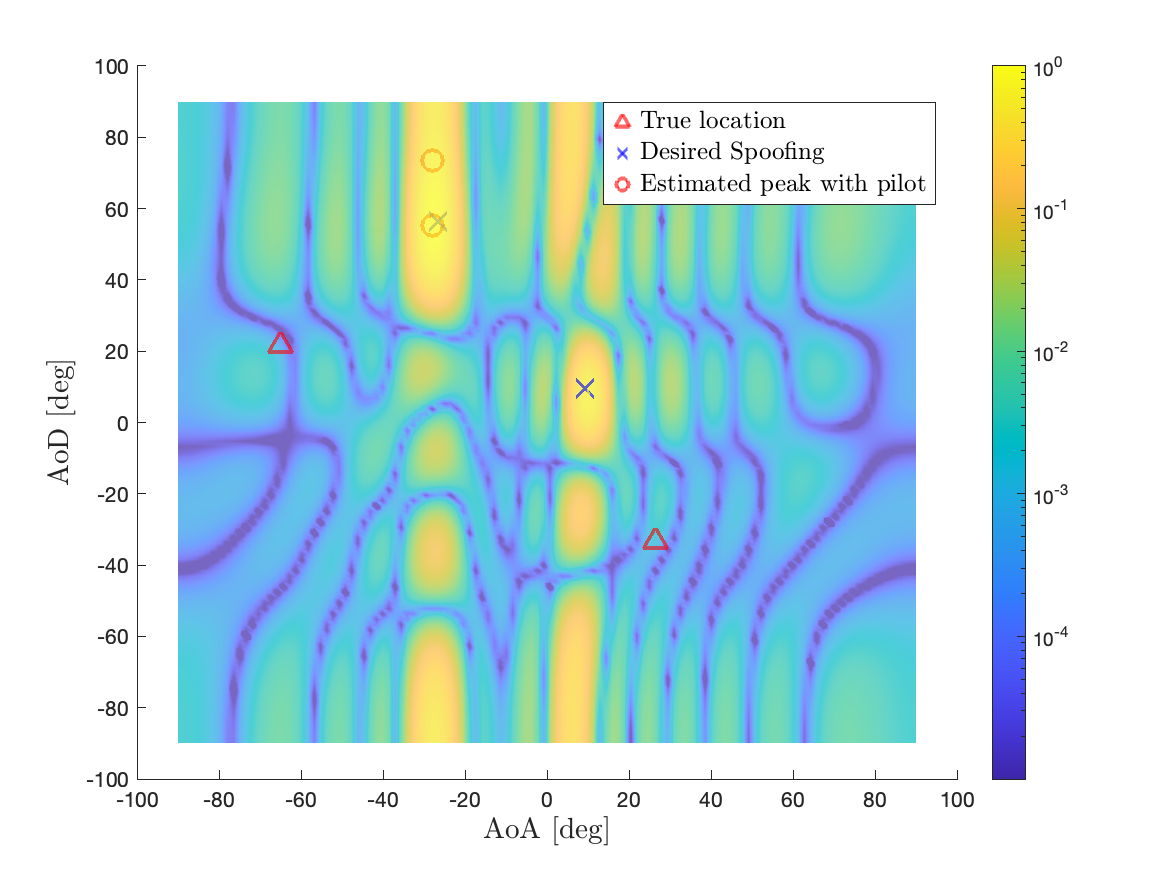}
        \caption{Pilot design \cite{italiano2025holotrace}}
        \label{fig:pilot}
    \end{subfigure}
    \caption{ Blind Spoofing (a) AoA/AoD attack with proposed analog precoder design, (b) AoA/AoD attack with pilot design \cite{italiano2025holotrace}.}
   \label{fig:heatmap}
\end{figure}
\subsection{Simulation Setup}
Using the \ac{bs} as a reference point for our local coordinates, we placed the \ac{bs} in 
\(\mathbf{p}_\mathrm{BS}^{}=\left[0 \; 0 \right]^\mathsf{T}\mathrm{m}\) with \(o_\mathrm{BS}^{}=0 \,\mathrm{rad}\), 
the \ac{ue} in \(\mathbf{p}_\mathrm{UE}^{}=\left[10\; 5\right]^\mathsf{T}\mathrm{m}\) with \(o_\mathrm{UE}^{}=\text{-}\frac{2}{3}\pi\,\mathrm{rad}\), 
and one \ac{sp} 
in \(\mathbf{p}_1^{}=\left[7\;\text{-}15\right]^\mathsf{T}\mathrm{m}\). 
The received signal in \eqref{eq:rx_signal} are modeled with the following parameters: $N_{r}^{}=15$, $N_{t}^{}=5$, $L=2$, $M=15$, $S=15$ with a bandwidth of $\mathrm{BW}=396$~MHz, and $f_c^{}=27.8$~GHz. With the given positions, the true measurements are $\bm{\theta}=\left[0.46\; \text{-}1.13\right]^\mathsf{T}\mathrm{rad}$, and $\bm{\phi}=\left[\text{-}0.58\; 0.37\right]^\mathsf{T}\mathrm{rad}$.
The spoofed position is chosen at \(\overline{\mathbf{p}}_\mathrm{UE}^{}=\left[30\; \text{}5\right]^\mathsf{T}\mathrm{m}\) with \(\overline{o}_\mathrm{UE}^{}=-\pi\,\mathrm{rad}\), and the \ac{sp} in \(\overline{\mathbf{p}}_1^{}=\left[20\; \text{-}10\right]^\mathsf{T}\mathrm{m}\). 
The respective spoofing measurements are $\overline{\bm{\theta}}=\left[0.59\; \text{-}0.46\right]^\mathsf{T}\mathrm{rad}$, $\overline{\bm{\phi}}=\left[0.59\; 1.25\right]^\mathsf{T}\mathrm{rad}$. We define spoofing deviation for AoA/ AoD as
    $\varepsilon_{\mathrm{dev}}^{\text{AoA}} = \Vert \widehat{\bm{\theta}} - \overline{\bm{\theta}}\Vert$ and $\varepsilon_{\mathrm{dev}}^{\text{AoD}} = \Vert \widehat{\bm{\phi}} - \overline{\bm{\phi}}\Vert$, respectively, 
    representing the deviation between the estimated and the {desired} spoofed angles at the BS. All results are expressed as \ac{rmse} values, computed over $T = 250$ independent Monte Carlo simulations. 
The \ac{rmse} is defined, for a generic error $\varepsilon$, as $\mathrm{RMSE}=\sqrt{{1}/{T}\sum^{T}_{t=1}\varepsilon_t^2}.$

\subsection{Simulation Results}
The simulation results highlight a significant performance gap between the proposed analog precoder design and the baseline pilot-based spoofing approach \cite{italiano2025holotrace}. As shown in Fig.~\ref{fig:heatmap}, UE’s capacity to manipulate the BS’s spatial perception is significantly more precise when utilizing precoder design. In Fig.~\ref{fig:precoder}, the proposed analog precoder design achieves near-perfect alignment between the desired location and the estimated peaks. The cost function heatmap obtained from \eqref{eq:spoofingCriterion1} displays sharp, well-defined energy concentrations at the target AoA and AoD coordinates. This indicates that the precoder successfully spoofs the spatial signature, even without knowledge of true channel gains, and multiple spoofed paths are present. Conversely, Fig.~\ref{fig:pilot} illustrates the limitations of a pilot-only spoofing strategy. The BS fails to resolve the intended spoofed angles correctly.

Fig.~\ref{fig:rmse} shows that precoder-based spoofing errors for both AoA $\varepsilon_{\mathrm{dev}}^{\mathrm{AoA}}$ and AoD $\varepsilon_{\mathrm{dev}}^{\mathrm{AoD}}$ continue to decay toward zero, demonstrating that the spoofing becomes increasingly precise as noise is reduced. By exploiting the antenna array spatial degrees of freedom, the precoder creates a synthetic steering vector that matches the target AoA/AoD. On the other hand, pilot-only spoofing provides insufficient control over the effective spatial signature, leading to an RMSE floor. 
\begin{figure}
    \centering
    % This file was created by matlab2tikz.
%
%The latest updates can be retrieved from
%  http://www.mathworks.com/matlabcentral/fileexchange/22022-matlab2tikz-matlab2tikz
%where you can also make suggestions and rate matlab2tikz.
%
\definecolor{mycolor1}{rgb}{0.14902,0.14902,0.14902}%
\definecolor{mycolor2}{rgb}{1.00000,0.41176,0.16078}%
\begin{tikzpicture}

\begin{axis}[%
width=0.761\linewidth,
height=0.601\linewidth,
at={(0\linewidth,0\linewidth)},
scale only axis,
xmin=-10,
xmax=20,
xlabel style={font=\color{white!15!black}},
xlabel={Transmitted Power [dBm]},
ymode=log,
ymin=0.654711452078273,
ymax=100,
yminorticks=true,
ylabel style={font=\color{white!15!black}},
ylabel={RMSE [deg]},
axis background/.style={fill=white},
xmajorgrids,
ymajorgrids,
yminorgrids,
legend style={at={(0.178,0.119)}, anchor=south west, legend cell align=left, align=left, draw=white!15!black},
clip=false,
legend style={at={(0.02,0.02)}, anchor=south west, draw=black, fill=white, font=\footnotesize}
]
\addplot [color=mycolor1, line width=1.0pt, mark=triangle, mark options={solid, mycolor1}]
  table[row sep=crcr]{%
-10	36.3733256025921\\
-5	27.6342916915155\\
0	12.0831273227798\\
5	1.42608759005213\\
10	1.24756268366764\\
15	0.888240553841458\\
20	0.654711452078273\\
};
\addlegendentry{AoA with precoder design}

\addplot [color=mycolor2, line width=1.0pt, mark=triangle, mark options={solid, rotate=270, mycolor2}]
  table[row sep=crcr]{%
-10	37.0552431956653\\
-5	33.0552431956653\\
0	33.0552431956653\\
5	33.0552431956653\\
10	33.0552431956653\\
15	33.0552431956653\\
20	33.0552431956653\\
};
\addlegendentry{AoA with pilot design}

\addplot [color=blue, line width=1.0pt, mark=diamond, mark options={solid, blue}]
  table[row sep=crcr]{%
-10	49.0123657528771\\
-5	44.8170218984004\\
0	17.0985366831494\\
5	2.75949154919256\\
10	1.93456118731762\\
15	1.25473330700124\\
20	0.953820296605059\\
};
\addlegendentry{AoD with precoder design}

\addplot [color=red, line width=1.0pt, mark=triangle, mark options={solid, rotate=90, red}]
  table[row sep=crcr]{%
-10	70.2222187804389\\
-5	68.7777401553174\\
0	67.6371120106501\\
5	67.6364354632967\\
10	67.6364354632967\\
15	67.6364354632967\\
20	67.6364354632967\\
};
\addlegendentry{AoD with pilot design}

\end{axis}
\end{tikzpicture}%
    \caption{Spoofing deviation in AoA/AoD.}
    \label{fig:rmse}
\end{figure}
\begin{comment}
  Fig.~\ref{fig:spectral_eff} shows the achievable communication rate for two spoofing strategies versus transmit power. From an adversarial perspective, the objective is not only to mislead the base station about the user’s location, but also to maintain a high communication rate. The figure illustrates how this trade-off depends on the chosen virtual location. Precoder-based spoofing provides better angular control, but its rate depends strongly on the chosen virtual location. If the chosen spoofed location  $(\overline{\mathbf{p}}_\mathrm{UE}^{1}  = \left[30\; 5\right]^\mathsf{T}\mathrm{m}$) corresponds to low channel gain, the precoder-based method (Blue Solid) suffers a rate loss relative to the pilot-only baseline, since part of the spatial degrees of freedom must be used to enforce the false geometry. By contrast, when the spoofed location $(\overline{\mathbf{p}}_\mathrm{UE}^{2}  = \left[30\; 18.4\right]^\mathsf{T}\mathrm{m}$)  is aligned with the true channel conditions, the precoder-based attack (Blue Dashed) achieves a rate close to the ``No Spoof'' case. Pilot-based spoofing (Pilot \#1 and \#2) is less sensitive to the spoofed coordinates, since it does not actively reshape the spatial signature.  
\end{comment}

Fig.~\ref{fig:spectral_eff} shows the achievable communication rate for two spoofing strategies versus transmit power. From an adversarial perspective, the objective is to mislead the BS while maintaining a high communication rate, revealing a fundamental trade-off between spoofing effectiveness and communication performance. Precoder-based spoofing enables accurate angular control, but its rate is sensitive to the chosen virtual location. For a spoofed location with weak channel direction $(\overline{\mathbf{p}}_\mathrm{UE}^{1}  = [30 \; 5]^\mathsf{T}\mathrm{m})$, the rate decreases (Blue Solid) since part of the spatial degrees of freedom must be used to enforce the false geometry. In contrast, when the spoofed location aligns with favorable channel conditions $(\overline{\mathbf{p}}_\mathrm{UE}^{2} = [30  \;18.4]^\mathsf{T}\mathrm{m})$, the rate approaches the ``No Spoof'' case (Blue Dashed). Pilot-based spoofing (Pilot \#1 and \#2) is less sensitive to the spoofed location, as it does not actively reshape the spatial signature.

%Fig.~\ref{fig:spectral_eff} illustrates the achievable communication rate under two distinct spoofing strategies across varying transmit power levels. The results demonstrate that while precoder-based spoofing offers superior angular accuracy, its impact on the communication link is highly dependent on the target virtual location. When the spoofed location lies in a region of inherently low channel gain, the precoder-based approach (Blue Triangle) yields a lower rate compared to the pilot-only baseline. This occurs because the analog precoder must allocate significant spatial resources to override the physical channel geometry, resulting in a beamforming penalty that reduces overall signal strength. When the spoofed position is optimized to match the channel characteristics of the true location, the precoder-based attack (Blue Solid) achieves a rate nearly identical to the "No Spoof" scenario. In this case, the spatial signature is synthesized efficiently, allowing the attacker to maintain high-speed communication while simultaneously deceiving the BS’s positioning system. In contrast, the communication rate for pilot-based spoofing (Pilot $\#1$ and $\#2$ ) remains relatively stable and less sensitive to the chosen spoofing coordinates. Since the pilot-only attack lacks beamforming control and does not attempt to reshape the physical wavefront, it avoids the heavy beamforming penalty of a poorly placed precoder.

\begin{figure}
    \centering
    % This file was created by matlab2tikz.
%
%The latest updates can be retrieved from
%  http://www.mathworks.com/matlabcentral/fileexchange/22022-matlab2tikz-matlab2tikz
%where you can also make suggestions and rate matlab2tikz.
%
\definecolor{mycolor1}{rgb}{0.85000,0.32500,0.09800}%
\definecolor{mycolor2}{rgb}{0.92900,0.69400,0.12500}%
\begin{tikzpicture}

\begin{axis}[%
width=0.761\linewidth,
height=0.6\linewidth,
at={(0\linewidth,0\linewidth)},
scale only axis,
xmin=-15,
xmax=45,
xlabel style={font=\color{white!15!black}},
xlabel={Transmitted Power [dBm]},
ymode=log,
ymin=7.55673137196853,
ymax=39.9857810764912,
yminorticks=true,
ylabel style={font=\color{white!15!black}},
ylabel={Spectral efficiency [bps/Hz]},
axis background/.style={fill=white},
xmajorgrids,
ymajorgrids,
yminorgrids,
legend style={at={(0.659,0.106)}, anchor=south west, legend cell align=left, align=left, draw=white!15!black},
clip=false,
legend style={at={(0.98,0.02)}, anchor=south east, draw=black, fill=white, font=\footnotesize}
]
\addplot [color=blue, line width=2.0pt, mark=triangle, mark options={solid, rotate=270, blue}]
  table[row sep=crcr]{%
-15	7.55673137196853\\
-10	8.74720613496729\\
-5	10.0167681216537\\
0	11.3592506364392\\
5	12.758118957161\\
10	14.1960256456239\\
15	15.6609543698227\\
20	17.1457330898981\\
25	18.6470089161029\\
30	20.1646635534224\\
35	21.7001264670523\\
40	23.256239763799\\
45	24.8370825782843\\
};
\addlegendentry{Spoof with precoder \#1}

\addplot [color=red, line width=2.0pt, mark=diamond, mark options={solid, red}]
  table[row sep=crcr]{%
-15	17.9395123030714\\
-10	19.6004724262776\\
-5	21.2614352327662\\
0	22.9223988877853\\
5	24.5833628111333\\
10	26.2443268193345\\
15	27.9052908543687\\
20	29.5662548978881\\
25	31.2272189440908\\
30	32.8881829911421\\
35	34.5491470384617\\
40	36.2101110858661\\
45	37.8710751332974\\
};
\addlegendentry{Spoof with pilot \#1}

\addplot [color=blue, dashed, line width=2.0pt, mark=triangle, mark options={solid, rotate=270, blue}]
  table[row sep=crcr]{%
-15	19.024167490048\\
-10	20.5961453940658\\
-5	22.185745145941\\
0	23.7891096566338\\
5	25.4041955291474\\
10	27.0304485370434\\
15	28.6671653945391\\
20	30.3121178504158\\
25	31.9629124639646\\
30	33.6181594715515\\
35	35.2765339905944\\
40	36.9365308587044\\
45	38.5971694127473\\
};
\addlegendentry{Spoof with precoder \#2}

\addplot [color=mycolor1, dashed, line width=2.0pt, mark=diamond, mark options={solid, mycolor1}]
  table[row sep=crcr]{%
-15	15.7580379991992\\
-10	17.418984245554\\
-5	19.0799426637534\\
0	20.7409049310692\\
5	22.4018684155865\\
10	24.0628322850172\\
15	25.7237962761683\\
20	27.3847603058106\\
25	29.0457243476251\\
30	30.7066883932886\\
35	32.3676524401694\\
40	34.028616487435\\
45	35.6895805348224\\
};
\addlegendentry{Spoof with pilot \#2}

\addplot [color=mycolor2, line width=2.0pt, mark=o, mark options={solid, mycolor2}]
  table[row sep=crcr]{%
-15	20.0542138322846\\
-10	21.7151769736489\\
-5	23.376140734565\\
0	25.0371046914007\\
5	26.6980687101916\\
10	28.3590327485745\\
15	30.0199967931529\\
20	31.6809608396905\\
25	33.3419248868477\\
30	35.0028889342007\\
35	36.6638529816158\\
40	38.3248170290504\\
45	39.9857810764912\\
};
\addlegendentry{No spoof}

\end{axis}
\end{tikzpicture}%
    \caption{Spectral efficiency performance comparison.}
    \label{fig:spectral_eff}
\end{figure}
\section{Conclusion}
This paper considered the problem of precoder design for location privacy against a single BS performing multipath-based localization in a MIMO uplink setting, without relying on CSI.  We show that geometry-aware \ac{csi}-blind precoder design provides substantially stronger spoofing capability than pilot-only approaches. The additional spatial degrees of freedom provided by the analog precoder enable structured and controllable position deception in multipath environments. However, improved spoofing performance comes at a communication cost. The results show a clear trade-off between privacy and link quality. Therefore, the effectiveness of blind spatial spoofing should be evaluated jointly in terms of localization deception and communication cost. Future work includes extending the design to multicarrier systems, as well as to incorporate more realistic constraints, imperfect hardware calibration, and dynamic channels. Finally, extending the threat model to multi-BS localization and studying robust countermeasures against blind precoder-based spoofing would provide a more complete understanding of privacy vulnerabilities in future wireless localization systems.

\balance
\bibliographystyle{IEEEtran}
\bibliography{sub/ref}

@IEEEtranBSTCTL{IEEEexample:BSTcontrol,
  CTLuse_article_number     = "yes",
  CTLuse_paper              = "yes",
  CTLuse_url                = "no",
  CTLuse_forced_etal        = "yes",
  CTLmax_names_forced_etal  = "1",
  CTLnames_show_etal        = "1",
  CTLuse_alt_spacing        = "yes",
  CTLalt_stretch_factor     = "10",
  CTLdash_repeated_names    = "yes",
  CTLname_format_string     = "{f.~}{vv~}{ll}{, jj}",
  CTLname_latex_cmd         = "",
  CTLname_url_prefix        = "[Online]. Avafilable:"
}

@article{zhang_privacy_2025,
	title = {Privacy Preservation in {MIMO}-{OFDM} Localization Systems: A Beamforming Approach},
	issn = {2162-2337, 2162-2345},
	shorttitle = {Privacy {Preservation} in {MIMO}-{OFDM} {Localization} {Systems}},
	doi = {10.1109/LWC.2025.3560219},
	journal = {IEEE Wireless Communications Letters},
	author = {Zhang, Yuchen and Chen, Hui and Keskin, Musa Furkan and Pourafzal, Alireza and Zheng, Pinjun and Wymeersch, Henk and Al-Naffouri, Tareq Y.},
	year = {2025},
    volume={14},
    number={7},
    pages={1979-1983},
}

@inproceedings{GOSPA,
	address = {Xi'an, China},
	title = {Generalized optimal sub-pattern assignment metric},
	isbn = {9780996452700},
	url = {http://ieeexplore.ieee.org/document/8009645/},
	doi = {10.23919/ICIF.2017.8009645},
	urldate = {2025-06-13},
	booktitle = {2017 20th {International} {Conference} on {Information} {Fusion} ({Fusion})},
	publisher = {IEEE},
	author = {Rahmathullah, Abu Sajana and Garcia-Fernandez, Angel F. and Svensson, Lennart},
	month = jul,
	year = {2017},
	pages = {1--8},
}

@article{checa2020location,
  title={Location-privacy-preserving technique for {5G} mmWave devices},
  author={Checa, Javier Jim{\'e}nez and Tomasin, Stefano},
  journal={IEEE Communications Letters},
  volume={24},
  number={12},
  pages={2692--2695},
  year={2020},
  publisher={IEEE}
}

@inproceedings{tomasin2022beamforming,
  title={Beamforming and artificial noise for cross-layer location privacy of e-health cellular devices},
  author={Tomasin, Stefano},
  booktitle={2022 IEEE International Conference on Communications Workshops (ICC Workshops)},
  pages={568--573},
  year={2022},
  organization={IEEE}
}

@article{li2024channel,
  title={Channel state information-free location-privacy enhancement: Fake path injection},
  author={Li, Jianxiu and Mitra, Urbashi},
  journal={IEEE Transactions on Signal Processing},
  year={2024},
pages={3745 - 3760},
volume={72},
  publisher={IEEE}
}

@article{li2025delay,
  title={Delay-Angle Information Spoofing for Channel State Information-Free Location-Privacy Enhancement},
  author={Li, Jianxiu and Mitra, Urbashi},
  journal={arXiv preprint arXiv:2504.14780},
  year={2025}
}

@ARTICLE{italiano2025tutorial,
  author={Italiano, Lorenzo and Camajori Tedeschini, Bernardo and Brambilla, Mattia and Huang, Huiping and Nicoli, Monica and Wymeersch, Henk},
  journal={IEEE Communications Surveys \& Tutorials}, 
  title={A Tutorial on {5G} Positioning}, 
  year={2025},
  volume={27},
  number={3},
  pages={1488-1535},
  doi={10.1109/COMST.2024.3449031}}

@ARTICLE{beamforming,
  author={Yi, Wang and Zhiqing, Wei and Zhiyong, Feng},
  journal={China Communications}, 
  title={Beam training and tracking in mmWave communication: A survey}, 
  year={2024},
  volume={21},
  number={6},
  pages={1-22},
  doi={10.23919/JCC.ea.2021-0873.202401}}

@ARTICLE{5749702,
  author={Gholam, Fouad and Via, Javier and Santamaria, Ignacio},
  journal={IEEE Transactions on Vehicular Technology}, 
  title={Beamforming Design for Simplified Analog Antenna Combining Architectures}, 
  year={2011},
  volume={60},
  number={5},
  pages={2373-2378},
  doi={10.1109/TVT.2011.2142205}}

@article{italiano2025holotrace,
  title={HoloTrace: a Location Privacy Preservation Solution for {mmWave MIMO-OFDM} Systems},
  author={Italiano, Lorenzo and Pourafzal, Alireza and Chen, Hui and Brambilla, Mattia and Seco-Granados, Gonzalo and Nicoli, Monica and Wymeersch, Henk},
  journal={arXiv preprint arXiv:2509.23444},
  doi = {10.48550/arXiv.2509.23444},
  year={2025}
}

@ARTICLE{6847111,
  author={Alkhateeb, Ahmed and El Ayach, Omar and Leus, Geert and Heath, Robert W.},
  journal={IEEE Journal of Selected Topics in Signal Processing}, 
  title={Channel Estimation and Hybrid Precoding for Millimeter Wave Cellular Systems}, 
  year={2014},
  volume={8},
  number={5},
  pages={831-846},
  doi={10.1109/JSTSP.2014.2334278}}

@article{witrisal2016high,
  title={High-accuracy localization for assisted living: {5G} systems will turn multipath channels from foe to friend},
  author={Witrisal, Klaus and Meissner, Paul and Leitinger, Erik and Shen, Yuan and Gustafson, Carl and Tufvesson, Fredrik and Haneda, Katsuyuki and Dardari, Davide and Molisch, Andreas F and Conti, Andrea and others},
  journal={IEEE Signal Processing Magazine},
  volume={33},
  number={2},
  pages={59--70},
  year={2016},
  publisher={IEEE}
}

@article{wymeersch20185g,
  title={{5G mmWave} positioning for vehicular networks},
  author={Wymeersch, Henk and Seco-Granados, Gonzalo and Destino, Giuseppe and Dardari, Davide and Tufvesson, Fredrik},
  journal={IEEE Wireless Communications},
  volume={24},
  number={6},
  pages={80--86},
  year={2018},
  publisher={IEEE}
}

@ARTICLE{Shahmansoori2018,
  author={Shahmansoori, Arash and Garcia, Gabriel E. and Destino, Giuseppe and Seco-Granados, Gonzalo and Wymeersch, Henk},
  journal={IEEE Transactions on Wireless Communications}, 
  title={Position and Orientation Estimation Through Millimeter-Wave {MIMO} in {5G} Systems}, 
  year={2018},
  volume={17},
  number={3},
  pages={1822-1835},
  doi={10.1109/TWC.2017.2785788}}

@ARTICLE{Sun2022,
  author={Sun, Yifei and Li, Jie and Zhang, Tong and Wang, Rui and Peng, Xiaohui and Han, Xiao and Tan, Haisheng},
  journal={Journal of Communications and Information Networks}, 
  title={An Indoor Environment Sensing and Localization System via {mmWave} Phased Array}, 
  year={2022},
  volume={7},
  number={4},
  pages={383-393},
  doi={10.23919/JCIN.2022.10005216}}

@INPROCEEDINGS{Lin2020,
  author={Lin, Zhipeng and Lv, Tiejun and Zhang, J. Andrew and Liu, Ren Ping},
  booktitle={2019 IEEE Global Communications Conference (GLOBECOM)}, 
  title={{3D} Wideband {mmWave} Localization for {5G} Massive {MIMO} Systems}, 
  year={2019},
  volume={},
  number={},
  pages={1-6},
  doi={10.1109/GLOBECOM38437.2019.9013639}}

@inproceedings{ardagna2007location,
  title={Location privacy protection through obfuscation-based techniques},
  author={Ardagna, Claudio A and Cremonini, Marco and Damiani, Ernesto and De Capitani di Vimercati, S and Samarati, Pierangela},
  booktitle={IFIP annual conference on data and applications security and privacy},
  pages={47--60},
  year={2007},
  organization={Springer}
}

@inproceedings{abedi2022non,
  title={Non-cooperative {Wi-Fi} localization \& its privacy implications},
  author={Abedi, Ali and Vasisht, Deepak},
  booktitle={Proceedings of the 28th Annual International Conference On Mobile Computing And Networking},
  pages={570--582},
  year={2022}
}

@inproceedings{zhang2024privacy,
  title={Privacy preservation in delay-based localization systems: Artificial noise or artificial multipath?},
  author={Zhang, Yuchen and Chen, Hui and Wymeersch, Henk},
  booktitle={GLOBECOM 2024-2024 IEEE Global Communications Conference},
  pages={2755--2760},
  year={2024},
  organization={IEEE}
}

@article{khan2025beamforming,
  title={On Beamforming for Transmitter Location Privacy in {MIMO} Systems},
  author={Khan, Umair Ali and Ho, Lester and Claussen, Holger and Flanagan, Mark F and Kundu, Chinmoy},
  journal={arXiv preprint arXiv:2508.09882},
  year={2025}
}
\end{document}